\newcommand\reals{\mathds{R}}
\DeclareMathOperator{\BR}{\mathsf{BR}}
\newcommand\ALPHABET{\mathcal}
\DeclareMathOperator\Span{span}
\DeclareMathOperator\Lipschitz{Lip}
\newcommand\Minkowski{\rho}
\newcommand\PR{\mathds{P}}
\newcommand\EXP{\mathds{E}}
\newcommand\IND{\mathds{1}}
\newcommand\ONES{\mathbf{1}}
\DeclareMathOperator\Supp{Supp}
\newcommand{\MDP}{\mathcal M}
\newcommand{\approxMDP}{\widehat{\mathcal{M}}}
\newcommand{\St}{S} % State - RV
\newcommand{\st}{s} % Realization of state
\newcommand{\StSp}{\ALPHABET{\St}} % State space
\newcommand{\Act}{A}
\newcommand{\act}{a}
\newcommand{\ActSp}{\ALPHABET{\Act}}
\newcommand{\ActSpAll}{(\ALPHABET{\Act}^i)_{i \in \playerset}}
\newcommand{\Trans}{\mathsf{P}}
\newcommand{\approxTrans}{\widehat{\mathsf{P}}}
\newcommand{\arbitraryTrans}{\bar{\mathsf{P}}}
\newcommand{\rew}{r} % Reward - RV
\newcommand{\approxrew}{\hat{r}}
\newcommand*\rewAll{(r^i)_{i \in \playerset}} % Reward - RV for all players
\newcommand*\approxrewAll{(\hat{r}^i)_{i \in \playerset}}
\newcommand\discount{\gamma}
\newcommand{\Pol}{\Pi}
\newcommand{\pol}{\pi}
\newcommand{\approxpol}{\hat{\pi}}
\newcommand{\polAll}{(\pol^i)_{i \in \playerset}}
\newcommand{\polAllApprox}{(\hat \pol^i)_{i \in \playerset}}
\newcommand{\Bellman}{\ALPHABET{B}}
\newcommand{\valf}{v}
\newcommand{\Valf}{V}
\newcommand{\approxValf}{\hat{V}}
\newcommand{\PSP}{\mathscr{P}}
\newcommand{\playerset}{\ALPHABET{N}}
\newcommand{\game}{\mathscr{G}}
\newcommand{\approxgame}{\widehat{\mathscr{G}}}
\newcommand{\F}{\mathfrak{F}}
\newcommand{\FTV}{\mathfrak{F}^{\mathrm{TV}}}
\newcommand{\FW}{\mathfrak{F}^{\mathrm{W}}}
\newcommand{\W}{d_{\FW}}
\newcommand{\TV}{d_{\FTV}}
\newcommand*\rewerror{\varepsilon}
\newcommand*\proberror{\delta}
\newcommand*\aiserror{(\rewerror, \proberror)}
\newcommand*\GAMMAN{0.5824}
\newcommand*\NH{11,122,695}
\newcommand*\NB{833,348}
\begin{document}

\titlerunning{Model-based MARL for general sum Markvov games}
 \title{Robustness and sample complexity of model-based MARL for general-sum Markov games}

\author{Jayakumar Subramanian \and Amit Sinha \and Aditya Mahajan%
\thanks{The work of Amit Sinha and Aditya Mahajan was supported in part by the
Innovation for Defence Excellence and Security (IDEaS) Program of the Canadian
Department of National Defence through grant CFPMN2-30.}%
\thanks{A preliminary version of this work appeared in the 2021 Indian Control Conference.}
}
\institute{J. Subramanian \at Media and Data Science Research Lab, Digital Experience Cloud,\\ Adobe Inc., Noida, Uttar Pradesh, India. \\\email{jasubram@adobe.com} 
  \and A. Sinha and A. Mahajan \at Department of Electrical and Computer
  Engineering,\\ McGill University, Montreal,
Canada.\\\email{amit.sinha@mail.mcgill.ca, aditya.mahajan@mcgill.ca}}

\maketitle

\begin{abstract}
  Multi-agent reinforcement learning (MARL) is often modeled using the framework of Markov games (also called stochastic games or dynamic games). Most of the existing literature on MARL concentrates on zero-sum Markov games but is not applicable to general-sum Markov games. It is known that the best-response dynamics in general-sum Markov games are not a contraction. Therefore, different equilibria in general-sum Markov games can have different values. Moreover, the Q-function is not sufficient to completely characterize the equilibrium. Given these challenges, model based learning is an attractive approach for MARL in general-sum Markov games.

In this paper, we investigate the fundamental question of \emph{sample complexity}  for model-based MARL algorithms in general-sum Markov games. We show two results. We first use Hoeffding inequality based bounds to show that  $\tilde{\mathcal{O}}( (1-\discount)^{-4} \alpha^{-2})$ samples per state-action pair are sufficient to obtain a $\alpha$-approximate Markov perfect equilibrium with high probability,
where %$\StSp$ is the state space, $\ActSp$ is the joint action space of all players, and
$\discount$ is the discount factor, and the $\tilde{\mathcal{O}}(\cdot)$ notation hides logarithmic terms.
We then use Bernstein inequality based bounds to show that $\tilde{\mathcal{O}}( (1-\discount)^{-1} \alpha^{-2} )$ samples are sufficient.
To obtain these results, we study the robustness of Markov perfect equilibrium to model approximations. We show that
the Markov perfect equilibrium of an approximate (or perturbed) game is always an approximate Markov perfect equilibrium of the original game and provide explicit bounds on the approximation error. 
We illustrate the results via a numerical example.
\end{abstract}

\section{Introduction}
Markov games (also called stochastic games or dynamic games) are a commonly used framework to model strategic interaction between multiple players interacting in a dynamic environment. Examples include applications in cyber-security~\citep{sengupta2019general}, industrial organization~\citep{ericson1995markov,fershtiman2000dynamic}, political economics~\citep{acemoglu2001theory}, advertisement and pricing~\cite{albright1979birth}, 
and many others~\citep{Basar2018}. Starting with the seminal work of \citet{Shapley_1953}, several variations of Markov games have been considered in the literature.  We refer the reader to~\cite{FilarCompetitive1996} for an overview. 

\smallskip\noindent\textbf{Overview of Markov games.} In the basic setup of a dynamic game, the payoffs of players at any time not only depend on their current joint action profile but also on the current ``state of the system''. Furthermore, the state of the system evolves in a controlled Markov manner conditioned on the current action profile of the players. It is typically assumed that the state of the system and the action profile of all players are publicly monitored by all players. Although Markov games may be viewed as a special case of extensive form games with perfect information, rather than using the standard solution concept of sub-game perfect equilibrium, attention is often restricted to a refinement of sub-game perfect equilibrium called Markov perfect equilibrium (MPE) where all players play Markov strategies (i.e., choose their actions as a (possibly randomized) function of the current state)~\citep{maskin1988theory, maskin1988theory2}. MPE is an attractive refinement of sub-game perfect equilibrium, both from a computational as well as conceptual point of view, but has some limitations because it excludes some history dependent strategies (such as tit-for-tat and grim trigger) commonly used in the repeated games setup. See~\cite{maskin2001markov, MailathRepeated} for a discussion.

Games can also be classified based on the sum of per-step payoffs of players as zero-sum or general-sum games. The nature of results in these two cases are different as are the tools used to prove them. The differences stem from the fact that the best response mapping (called the Shapley operator) for two-player zero-sum games is a contraction~\citep{Shapley_1953}. Therefore, zero-sum games have a unique value (i.e., all equilibria in zero-sum games have the same value). Moreover, the MPE (also called minimax equilibrium for the zero-sum case) can be computed via recursive operations of the Shapley operator~\citep{Shapley_1953,hoffman1966nonterminating}. In contrast, the best response mapping for general-sum games is not a contraction. Therefore, the existence of MPE needs to be proved using variations of Kakutani's fixed point theorem \citep{fink1964equilibrium, TakahashiEquilibrium1964, RogersNonzerosum1969, VriezeStochastic1987}. A consequence of this is that, in general, different MPEs do not have the same value, which makes it difficult to compute MPEs. Various algorithms have been proposed to compute MPE, including non-linear programming \citep{Breton1991,filar1991nonlinear} and homotopy methods \citep{herings2004stationary, herings2010homotopy}. It was recently established by \cite{deng2021complexity} that the computational complexity of computing MPE is PPAD-complete.

In spite of these challenges, computing MPE of general-sum games is an important research direction  because several real-world problems are not zero-sum. The applications of network security~\citep{sengupta2019general}, industrial organization~\citep{ericson1995markov,fershtiman2000dynamic}, and political economics~\citep{acemoglu2001theory} mentioned above are all general-sum games.

\smallskip\noindent\textbf{Multi-agent reinforcement learning.}
In recent years, there has been significant interest in understanding interaction between strategic agents operating in unknown environments. Such multi-player problems are studied under the heading of multi-agent reinforcement learning (MARL) and often modeled as Markov games \citep{littman1994markov,busoniu2008comprehensive,zhang2021multi}. Although there have been significant recent successes in single agent RL, these do not directly translate into the multi-agent setting. Part of the difficulty is that when multiple agents are learning simultaneously, the ``environment'' as viewed by any single agent is non-stationary \citep{busoniu2008comprehensive}; so it is not possible to use the theoretical guarantees of single agent RL algorithms, which are derived for a stationary or time-homogeneous environment. 

Nonetheless, MARL for two player zero-sum games is well understood due to two properties.
First, if two strategies $(\pi^1, \pi^2)$ and $(\mu^1, \mu^2)$ are minimax
equilibrium, then so are strategies $(\pi^1, \mu^2)$ and $(\mu^1,
\pi^2)$. Therefore, to identify equilibrium strategies, it is sufficient to
learn the action-value function (i.e., the Q-function). Second, the
action-value function can be learnt using variants of Q-learning (called
minimax Q-learning) because the Shapley operator is a contraction
\citep{littman1994markov, littman2001value}. 
We refer the reader to \cite{shoham2003multi} for an overview of MARL for zero-sum games.

However, the situation is different for general-sum MARL, where fewer
convergence guarantees are available. Part of the difficulty is that the
action-value function (or $Q$-function) is insufficient to characterize MPE
\citep[Theorem 1]{zinkevich2006cyclic}.\footnote{\citet{zinkevich2006cyclic}
  construct two player general-sum games with the following properties. The
  game has two states: in state~1, player~1 has two actions and player~2 has
  one action; in state~2, player~1 has one action and player~2 has two
  actions. The transition probabilities are chosen such that there is a unique
  Markov perfect equilibrium in mixed strategies. This means that in state~1, both
  actions of player~1 maximize the $Q$-function; in state~2, both actions of
  player~2 minimize the $Q$-function. However, the $Q$-function in itself is
  insufficient to determine the randomizing probabilities for the mixed
strategy MPE.} For this reason, algorithms 
developed for two-player zero-sum games fail to converge to an MPE in
general-sum games~\citep{perolat2017learning}. There are some partial results,
e.g., minimizing Bellman residual error to identify $\varepsilon$-MPE
\citep{perolat2017learning}, using two-time scale stochastic approximation
algorithms \citep{prasad2015two}, and using replicator dynamics based
algorithms \citep{akchurina2010multi}. However, in general, developing MARL
algorithms with convergence guarantees remains a challenging research
direction.

There is some recent work on learning in general-sum games.
    \cite{leonardos2021global} show global convergence of a policy gradient algorithm in a special class of Markov games called Markov potential games, which are a generalization of normal form potential games and assume the existence of a common potential function for all players. \cite{zhang2021gradient} also consider Markov potential games and present convergence analysis for a sample-based RL method in such games. While both these results are interesting, they do not apply to Markov games which do not have a potential function. Another recent result is presened by \cite{song2021can}, who present
    present and analyse algorithms for computing correlated and coarse correlated equilibria for general-sum games. The solution concepts of correlated equilibrium and its variations are different from MPE. In correlated equilibrium, players agree on a joint randomization strategies before the system starts running; such pre-game agreement is not allowed in MPE.

\smallskip\noindent\textbf{Model based MARL, sample complexity, and robustness of equilibria.}
One potential approach to alleviate the difficulties in MARL for general-sum games is to use model based algorithms, which explicitly learn (or estimate) the system model and then use a ``planning algorithm'' to find the solution of the estimated model \citep{sutton1990}. There has been significant recent interest in model based RL for single agent systems (see \cite{Wang2019Benchmarking} and references therein) and some interest in model-based approaches for MARL for zero-sum games \citep{Krupnik2019MultiAgent,sidford2020solving,Zhang2021ModelBased,zhang2020model}. However, as far as we are aware, there are no model based MARL algorithms for general-sum Markov games.

An important consideration in model-based RL is to determine how many samples are needed to identify an $\alpha$-approximate solution (for a pre-specified accuracy level $\alpha$). This is known as \emph{sample complexity} of learning and is typically analyzed under the assumption that the learning agent has access to a generative model, i.e., a black box simulator that takes the current state and action profile as input and generates samples of the next state as output.

Starting with the work of~\cite{kearns1999finite,Kakade:PhD}, there is an extensive literature on the sample complexity of Markov decision processes (MDPs) \citep{azar2013minimax,sidford2018near,agarwal2020model,li2020breaking}. The simplest approach in this setting is to use a plug-in estimator,\footnote{The plug-in estimator is also known as a certainty equivalent controller in the stochastic control literature.} i.e., estimating the transition matrix using the generated samples and using the optimal policy corresponding to the estimated model in the true system. Recent results of \cite{agarwal2020model} show that the sample complexity of the plug-in estimator matches the lower bounds on sample complexity~\citep{azar2013minimax} modulo logarithmic factors. Recently, \cite{zhang2020model}, build on this line of work to establish sample complexity bounds for zero-sum games. As far as we are aware, sample complexity of generative models for general-sum games hasn't been investigated before.

The analyses of model-based RL algorithms rely on the \emph{robustness} of the ``planning solution'' to model approximations, i.e., \emph{if the estimated model is close to the true model in some sense, does that imply that the strategy generated from the estimated model is approximately appropriate in some sense (optimality, equilibrium, etc.)?} This question is well understood for Markov decision processes (see \cite{Muller1997} and follow-up work) and zero-sum Markov games \citep{Tidball1996Approximations, Tidball1997Approximations}. In this paper, we address the question of robustness for general-sum Markov games. In particular, we show that if a dynamic game is approximated by another game such that the reward functions and transitions of the approximate game are close to those of the original game (in an appropriate sense), then a MPE of the approximate game is an approximate MPE of the original game. We quantify the exact relationship between the degree of approximation of the games and the approximation error in the MPE. 
We then build up on these results to establish sample complexity bounds for learning with a generative model for general-sum Markov games. 

The notion of robustness is also useful in its own right. In many applications, the model of a dynamic game is estimated using modern econometric techniques~\cite{aguirregabiria2007sequential, bajari2007estimating, pakes2007simple,pesendorfer2008asymptotic}.  In such situations, robustness characterizes the approximation error in using a MPE of an approximate game, in terms of the approximation errors in estimating the reward function and transition dynamics of the game.

\smallskip\noindent\textbf{Other notions of robustness.}
Our notion of robustness is different from that of robust control \citep{Basar2008} and robust Markov perfect equilibrium \citep{Jaskiewicz2014}, both of which are Markov decision processes with uncertain dynamics and are treated as zero-sum games where nature acts as an adversary and picks the worst-case realization of the transition dynamics. Our notion of robustness is also different from
uniformly $\varepsilon$-equilibrium \citep{Solan2021}, which captures robustness with respect to time-horizon and discount factor.

Our notion of robustness is similar in spirit to robust MPEs considered in \cite{maskin2001markov}, who defined a MPE to be robust if for any small perturbation of the payoffs, there exists a nearby MPE. \cite{maskin2001markov} showed that almost all finite horizon general-sum games have a finite number of MPEs, all of which are robust. Our results are of a different nature and it is difficult to compare the two results because \cite{maskin2001markov} considered an atypical model where the states are not specified exogenously but are rather determined as the payoff relevant component of the history. Consequently, perturbing the payoffs changes the state \emph{space}, which is not the case for our model. 

The notion of strong stability considered in \cite{Doraszelski2010} is related to the work in \cite{maskin2001markov}. It is shown in \cite{Doraszelski2010} that almost all Markov games have a finite number of MPEs and these equilibria can be approximated by equilibria of nearby games.
The dynamics in \cite{Doraszelski2010} are exogenous and, therefore, their result does not have the same limitations as that of \cite{maskin2001markov}. The result of \cite{Doraszelski2010} is stronger than ours because we only show that equilibria of nearby games are approximate equilibria of the original game but we do not establish that they are also close to the equilibria of the original game. However, the results of \cite{Doraszelski2010} rely on continuity arguments and do not explicitly characterize bounds on the size of the neighborhood. In contrast, for any $\varepsilon$ perturbation in payoffs and $\delta$ perturbation in dynamics, we explicitly characterize an $\alpha$ such that the MPE of the perturbed game is an $\alpha$-MPE of the original game.

Perhaps the result most similar to ours is \cite{whitt1980representation}, who consider a more general model and allow the approximate game to have a different state and action space than the original game. Their main result is to show that any $\alpha_{\mathrm{opt}}$-MPE of the approximate game is an $\alpha$-MPE of the original game and an explicit relationship between $\alpha_{\mathrm{opt}}$ and $\alpha$ is established. Our results are similar in spirit but the specific details are different. 

\smallskip \noindent \textbf{Organization.}
The rest of the paper is organized as follows. In Sec.~\ref{sec:model}, we present our notion of approximation of a dynamic game and state our main results. In Sec.~\ref{sec:MDP}, we present background results on approximation of Markov decision processes. In Sec.~\ref{sec:proof}, we provide the proof of our main results. In Sec.~\ref{sec:numerical_examples}, we present numerical examples to validate our theoretical results. We conclude in Sec.~\ref{sec:conclusion}.

\smallskip \noindent \textbf{Notation.} 
We use $\reals$ to denote the set of real numbers, $\PR(\cdot)$ to denote the probability of an event, $\EXP[\cdot]$ to denote the expectation of a random variable, and $\PSP(\cdot)$ denotes the set of probability measures on a set.

We use calligraphic letters (e.g., $\StSp$, $\ActSp$, etc.) to denote sets, uppercase letters (e.g., $\St$, $\Act$, etc.) to denote random variables and lowercase letters (e.g., $\st$, $\act$, etc.) to denote their realization. 
Superscripts index players and subscripts index time. For example, $\act^i_t$ denotes the action of player~$i$ at time~$t$. For sequence of variables $\{\st_t\}_{t \ge 1}$, we use the short hand notation $\st_{1:t}$ to denote the sequence $(\st_1, \dots, \st_t)$. We use $\ONES$ to denote a vector of ones of an appropriate size which is determined by context.

Given a function $f \colon \StSp \to \reals$, we use $\Span(f)$ to denote the span seminorm of~$f$, i.e., $\Span(f) \coloneqq \sup_{\st \in \St} f(\st) - \inf_{\st \in \St} f(\st)$. Given a metric space $(\StSp, d)$ and a function $f \colon \StSp \to \reals$, we use $\Lipschitz(f)$ to denote the Lipschitz constant of $f$, i.e., 
\[
  \Lipschitz(f) \coloneqq \sup_{\st, \st' \in \StSp} \frac{ | f(\st) - f(\st') | }{ d(\st, \st') }.
\]

\section{System model, robustness, and sample complexity }\label{sec:model}

We restrict the discussion in this paper to models with finite state and action spaces. The robustness results can be extended to models with continuous state and action spaces under standard technical assumptions on the existence of equilibria in that setting.

\subsection{Dynamic games}
An infinite horizon dynamic game (also called stochastic game or Markov game) is a tuple $\langle \playerset, \StSp, \ActSpAll, \Trans, \rewAll, \discount \rangle$ where:
\begin{itemize}
    \item $\playerset$ is the (finite) set of players.
    \item $\StSp$ is the (finite) set of possible states of the game. We use $\St_t \in \StSp$ to denote the state of the game at time~$t$.
    \item $\ActSpAll$ is the (finite) set of actions available to player $i$ at each time. We also use $\ActSp = \prod_{i \in \playerset}\ActSp^i$ to denote the set of actions of all players. We use $\Act_t = (\Act^i_t)_{i \in \playerset}$ to denote the action profile of all players at time~$t$. Given an action profile $\Act_t = (\Act^i_t)_{i \in \playerset}$ and a player $j \in \playerset$, we use the notation $\Act^{-j}_t = (\Act^i_t)_{i \in \playerset \setminus \{j\}}$ to denote the action profile of all players except $j$.
    \item $\Trans: \StSp \times \ActSp \to \PSP(\StSp)$ is the controlled transition probability of the state of the game. In particular, at any time~$t$, given a realization $\st_{1:t+1}$ of $\St_{1:t+1}$ and choice of action profile $\act_{1:t}$ of $\Act_{1:t}$, we have
    %% [inline block 0: 1 envs, 53859 chars -> math_tex | \begin{multline*}     \[...]

\end{enumerate}

Now, MPE and approximate MPE can be characterized using the Bellman operators. These are standard results. See, for example, \cite{FilarCompetitive1996}.

\begin{proposition}\label{prop:MPE-charac}
A Markov strategy profile $\pol \coloneqq \polAll$ is an MPE if and only if there exist \textbf{value functions} $\Valf^i \in \reals^{|\StSp|}, i \in \playerset$, such that
\begin{equation}
    \Valf^i = \Bellman^i_{(\pol^i, \pol^{-i})}\Valf^i
    \quad\text{and}\quad
    \Valf^i = \Bellman^i_{(*, \pol^{-i})}\Valf^i,  \quad \forall i \in \playerset.
\end{equation}
\end{proposition}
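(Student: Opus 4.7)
The plan is to reduce this proposition to the single-agent case (Proposition~\ref{prop:optimal-strategy-MDP}) by observing that once the opponents' strategies $\pol^{-i}$ are fixed, player~$i$ faces a standard Markov decision process and the two Bellman operators defined here are exactly the policy-evaluation and optimality Bellman operators for that induced MDP.

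First, for each player $i$ and each strategy profile $\pol^{-i}$ of the other players, I would introduce the induced MDP $\MDP^i_{\pol^{-i}} = \langle \StSp, \ActSp^i, \tilde{\Trans}^i, \tilde{\rew}^i, \discount \rangle$ with
\begin{align*}
  \tilde{\Trans}^i(\st' \mid \st, \act^i) &= \sum_{\act^{-i} \in \ActSp^{-i}} \pol^{-i}(\act^{-i} \mid \st)\, \Trans(\st' \mid \st, \act^i, \act^{-i}), \\
  \tilde{\rew}^i(\st, \act^i) &= \sum_{\act^{-i} \in \ActSp^{-i}} \pol^{-i}(\act^{-i} \mid \st)\, \rew^i(\st, \act^i, \act^{-i}).
\end{align*}
A direct inspection of the definitions shows that $\Bellman^i_{(\pol^i, \pol^{-i})}$ coincides with the policy-evaluation Bellman operator $\Bellman_{\pol^i}$ for the induced MDP $\MDP^i_{\pol^{-i}}$, and $\Bellman^i_{(*, \pol^{-i})}$ coincides with its optimality Bellman operator $\Bellman_*$. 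Moreover, by the tower property of expectations, the discounted value $V^i_{(\pol^i, \pol^{-i})}$ of player~$i$ in the game equals the MDP value $\Valf_{\pol^i}$ in $\MDP^i_{\pol^{-i}}$.

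For the ``only if'' direction, suppose $\pol$ is an MPE. Set $\Valf^i = V^i_{(\pol^i, \pol^{-i})}$. Standard policy evaluation gives $\Valf^i = \Bellman^i_{(\pol^i, \pol^{-i})} \Valf^i$. Since the MPE condition says no unilateral Markov deviation $\tilde{\pol}^i$ can improve player~$i$'s value from any initial state, $\pol^i$ is an optimal strategy in the induced MDP $\MDP^i_{\pol^{-i}}$. Applying Proposition~\ref{prop:optimal-strategy-MDP} to that MDP yields $\Valf^i = \Bellman^i_{(*, \pol^{-i})} \Valf^i$. For the ``if'' direction, suppose value functions $\Valf^i$ satisfy both equations. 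The first equation, together with the fact that $\Bellman^i_{(\pol^i, \pol^{-i})}$ is a $\discount$-contraction with unique fixed point $V^i_{(\pol^i,\pol^{-i})}$, identifies $\Valf^i$ with $V^i_{(\pol^i, \pol^{-i})}$. The second equation together with Proposition~\ref{prop:optimal-strategy-MDP} then implies that $\pol^i$ is optimal in $\MDP^i_{\pol^{-i}}$, i.e.\ $V^i_{(\pol^i,\pol^{-i})}(\st) \ge V^i_{(\tilde{\pol}^i, \pol^{-i})}(\st)$ for all $\tilde{\pol}^i \in \Pol^i$ and $\st \in \StSp$. As this holds for every $i \in \playerset$, $\pol$ is an MPE.

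This proof is essentially bookkeeping, so I do not expect a serious obstacle; the only thing to handle carefully is the correspondence between MPE deviations and MDP deviations. Because the MPE definition restricts both $\pol^i$ and $\tilde{\pol}^i$ to time-homogeneous Markov strategies in $\Pol^i$, and Proposition~\ref{prop:optimal-strategy-MDP} characterizes optimality within exactly the same class, the reduction is exact and no auxiliary argument about non-Markov deviations is needed.
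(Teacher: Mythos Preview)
Your proposal is correct and matches the paper's perspective: the paper does not actually prove Proposition~\ref{prop:MPE-charac} but cites it as a standard result from \cite{FilarCompetitive1996}, and the reduction you describe is precisely the induced-MDP construction $\MDP^i_{\pol^{-i}}$ that the paper introduces in Sec.~\ref{sec:proof} and uses to derive Corollary~\ref{cor:MPE-MDP} from Propositions~\ref{prop:optimal-strategy-MDP} and~\ref{prop:MPE-charac}. In effect, you have written out the standard argument the paper takes for granted, so there is nothing to correct.
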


An immediate consequence of Proposition~\ref{prop:MPE-charac} and the definition of approximation MPE is the following.

\begin{proposition}\label{prop:approxMPE-charac}
  Given a Markov strategy profile $\pol \coloneqq \polAll$, for any $i \in
  \playerset$, let $\Valf_\pol^i$ be the unique fixed point of 
  \(
    \Valf^i_\pol = \Bellman^i_{(\pol^i, \pol^{-i})}\Valf^i_\pol
  \) and let $\Valf_{(*,\pol^{-i})}^i$ be the unique fixed point of
  \(
    \Valf^i_{(*,\pol^{-i})} = \Bellman^i_{(*, \pol^{-i})}\Valf^i_{(*,\pol^{-i})}
  \).
  Then, the strategy profile $\pol$ is an $\alpha$-MPE, 
  $\alpha  \coloneqq (\alpha^i)_{i \in\playerset}$, 
  if and only if
  \begin{equation}
    \Valf_\pol^i \ge \Valf_{(*,\pol^{-i})}^i - \alpha^i \ONES, 
    \quad \forall i \in \playerset.
  \end{equation}
\end{proposition}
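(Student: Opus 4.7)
The plan is to reduce Proposition~\ref{prop:approxMPE-charac} to the MDP characterization of $\alpha$-optimality, namely Proposition~\ref{prop:approx-optimal-strategy-MDP}, by fixing player~$i$'s opponents' strategies and interpreting the resulting single-agent decision problem as a Markov decision process.

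\textbf{Step 1 (construct the induced MDP).} I fix $i \in \playerset$ together with the opponents' strategy profile $\pol^{-i}$, and define the auxiliary MDP $\MDP^i_{\pol^{-i}} = \langle \StSp, \ActSp^i, \tilde{\Trans}, \tilde{\rew}^i, \discount \rangle$ whose reward and transition average out the other players' actions under $\pol^{-i}$:
\begin{align*}
\tilde{\rew}^i(\st, \act^i) &= \sum_{\act^{-i} \in \ActSp^{-i}} \pol^{-i}(\act^{-i} \mid \st)\, \rew^i(\st, (\act^i, \act^{-i})), \\
\tilde{\Trans}(\st' \mid \st, \act^i) &= \sum_{\act^{-i} \in \ActSp^{-i}} \pol^{-i}(\act^{-i} \mid \st)\, \Trans(\st' \mid \st, (\act^i, \act^{-i})).
\end{align*}

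\textbf{Step 2 (identify operators and value functions).} A direct substitution into the definitions of Sec.~\ref{sec:proof} shows that, for any Markov strategy $\tilde{\pol}^i \in \Pol^i$, the game operator $\Bellman^i_{(\tilde{\pol}^i, \pol^{-i})}$ coincides with the MDP Bellman operator $\Bellman_{\tilde{\pol}^i}$ associated with $\MDP^i_{\pol^{-i}}$, and $\Bellman^i_{(*, \pol^{-i})}$ coincides with the MDP optimality operator $\Bellman_*$ associated with $\MDP^i_{\pol^{-i}}$. Both operators are $\discount$-contractions, so their fixed points are unique; consequently $\Valf^i_\pol$ is the value function of $\pol^i$ in $\MDP^i_{\pol^{-i}}$, and $\Valf^i_{(*, \pol^{-i})}$ is the optimal value function of $\MDP^i_{\pol^{-i}}$. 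Unrolling the expectation over the other players' random actions also shows that $\Valf^i_{(\tilde{\pol}^i, \pol^{-i})}(\st_0)$ agrees with the value of $\tilde{\pol}^i$ in $\MDP^i_{\pol^{-i}}$ for every $\tilde{\pol}^i \in \Pol^i$.

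\textbf{Step 3 (transfer the MDP characterization).} Under this identification, the defining inequality of an $\alpha$-MPE, $\Valf^i_{(\pol^i, \pol^{-i})}(\st) \ge \Valf^i_{(\tilde{\pol}^i, \pol^{-i})}(\st) - \alpha^i$ for every $\st \in \StSp$ and every alternative $\tilde{\pol}^i \in \Pol^i$, is precisely the statement that $\pol^i$ is $\alpha^i$-optimal in the MDP $\MDP^i_{\pol^{-i}}$. Applying Proposition~\ref{prop:approx-optimal-strategy-MDP} to $\MDP^i_{\pol^{-i}}$ converts this into the claimed fixed-point inequality $\Valf^i_\pol \ge \Valf^i_{(*, \pol^{-i})} - \alpha^i$, and doing so for each $i \in \playerset$ closes the proof in both directions.

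\textbf{Anticipated obstacle.} The argument is essentially bookkeeping and I do not foresee a substantive mathematical obstacle. The one point requiring care is verifying cleanly that averaging over $\pol^{-i}$ turns the game-theoretic operators into the single-agent operators of $\MDP^i_{\pol^{-i}}$; once this identification is made, the uniqueness of the fixed points transfers verbatim and Proposition~\ref{prop:approx-optimal-strategy-MDP} applies without modification to yield both the ``only if'' and ``if'' directions.
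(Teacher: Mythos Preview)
Your proposal is correct and is precisely the argument implicit in the paper. The paper does not give a standalone proof of Proposition~\ref{prop:approxMPE-charac}, citing it instead as a standard result, but your reduction---constructing the induced MDP $\MDP^i_{\pol^{-i}}$, identifying the game-theoretic Bellman operators with those of the MDP, and then invoking Proposition~\ref{prop:approx-optimal-strategy-MDP}---is exactly the machinery the paper sets up in Sec.~\ref{sec:proof} (see the paragraph following~\eqref{eq:MDP-rew} and Corollary~\ref{cor:alpha-MPE-MDP}). In effect, you have supplied the proof that the paper takes for granted, using the same bridge between games and MDPs that the paper builds for its main results.
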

\begin{proof}
  The proof follows from arguments similar to the proof of Prop.~\ref{prop:approx-optimal-strategy-MDP}.
\end{proof}
\subsection{Relationship between games and MDPs}
Given a game $\game \coloneqq  \langle \playerset, \StSp, \ActSpAll, \Trans, \rewAll, \discount \rangle$ and a Markov strategy \linebreak $\pol \coloneqq \polAll$, we can define MDPs $\{\MDP^i_{\pol^{-i}}\}_{i \in \playerset}$ as follows. For player $i \in \playerset$, MDP  $\MDP^i_{\pol^{-i}} \coloneqq \langle \StSp, \ActSp^i, \Trans^i_{\pol^{-i}}, \rew^i_{\pol^{-i}}, \discount \rangle$, where the transition matrix $\Trans^i_{\pol^{-i}}: \StSp \times \ActSp^i \to \PSP(\StSp)$ is given by
\begin{equation}\label{eq:MDP-Trans}
    \Trans^i_{\pol^{-i}}(\st'|\st, \act^i) \coloneqq  \sum_{\act^{-i} \in \ActSp^{-i}}\pol^{-i}(\act^{-i}|\st)\Trans(\st'|\st, (\act^i, \act^{-i})),
\end{equation}
and the reward function $\rew^i_{\pol^{-i}}: \StSp \times \ActSp^i \to \reals$ is given by
\begin{equation}\label{eq:MDP-rew}
    \rew^i_{\pol^{-i}}(\st, \act^i) \coloneqq \sum_{\act^{-i} \in \ActSp^{-i}}\pol^{-i}(\act^{-i}|\st)\rew^i(\st, (\act^i, \act^{-i})).
\end{equation}
In other words, in $\MDP^i_{\pol^{-i}}$ , the strategy of player $i$ may chosen freely while the strategies of all other players are fixed at those specified in $\pol^{-i}$. Note the Bellman operators $\Bellman^i_{(\pol^i, \pol^{-i})}$ and $\Bellman^i_{(*, \pol^{-i})}$ corresponding to game $\game$ and strategy $\pol$ are the same as Bellman operators of MDP $\MDP^i_{\pol^{-i}}$.
Therefore, by combining Propositions~\ref{prop:optimal-strategy-MDP} and~\ref{prop:MPE-charac}, we have the following:
\begin{corollary}\label{cor:MPE-MDP}
  A Markov strategy profile $\pol \coloneqq \polAll$ is an MPE if and only if for every $i \in \playerset$, the strategy $\pol^i$ is an optimal strategy for MDP $\MDP^i_{\pol^{-i}}$.
\end{corollary}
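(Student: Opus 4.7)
The plan is to reduce Corollary~\ref{cor:MPE-MDP} directly to the two characterization results already established: Proposition~\ref{prop:optimal-strategy-MDP} for MDPs and Proposition~\ref{prop:MPE-charac} for games. The key observation is that, once the opponent strategy profile $\pol^{-i}$ is fixed, player~$i$ faces what is essentially a single-agent decision problem, namely the MDP $\MDP^i_{\pol^{-i}}$ constructed via~\eqref{eq:MDP-Trans}--\eqref{eq:MDP-rew}. So the whole argument amounts to verifying that the Bellman operators on the game side and on the MDP side are literally the same mappings on $\reals^{|\StSp|}$.

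First, I would write down the two Bellman operators $\Bellman_{\pol^i}$ and $\Bellman_*$ associated with the MDP $\MDP^i_{\pol^{-i}}$ (using the formulas from Sec.~\ref{sec:MDP}), substitute the definitions of $\Trans^i_{\pol^{-i}}$ and $\rew^i_{\pol^{-i}}$ from~\eqref{eq:MDP-Trans}--\eqref{eq:MDP-rew}, and then collapse the resulting double sum $\sum_{\act^i}\pol^i(\act^i|\st)\sum_{\act^{-i}}\pol^{-i}(\act^{-i}|\st)(\cdots)$ into $\sum_{\act}\pol(\act|\st)(\cdots)$ using the product form $\pol(\act|\st) = \pol^i(\act^i|\st)\pol^{-i}(\act^{-i}|\st)$. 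This computation shows that $\Bellman_{\pol^i}$ (on MDP $\MDP^i_{\pol^{-i}}$) coincides exactly with $\Bellman^i_{(\pol^i,\pol^{-i})}$ (on game $\game$), and likewise $\Bellman_*$ coincides with $\Bellman^i_{(*,\pol^{-i})}$.

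With this identification in hand, the corollary follows by chaining the two propositions. On one side, Proposition~\ref{prop:MPE-charac} says that $\pol$ is an MPE iff, for every $i \in \playerset$, there exists $\Valf^i \in \reals^{|\StSp|}$ with $\Valf^i = \Bellman^i_{(\pol^i,\pol^{-i})}\Valf^i$ and $\Valf^i = \Bellman^i_{(*,\pol^{-i})}\Valf^i$. On the other side, Proposition~\ref{prop:optimal-strategy-MDP} (applied to MDP $\MDP^i_{\pol^{-i}}$) says that $\pol^i$ is optimal for $\MDP^i_{\pol^{-i}}$ iff there exists $\Valf^i$ satisfying $\Valf^i = \Bellman_{\pol^i}\Valf^i$ and $\Valf^i = \Bellman_*\Valf^i$. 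Because the two pairs of operators are the same, both characterizations reduce to the same pair of fixed-point equations, which proves both directions of the equivalence simultaneously.

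The only step that is substantive is the bookkeeping for the operator identification; the rest is a one-line citation of the propositions. I do not anticipate any genuine obstacle, since the construction of $\MDP^i_{\pol^{-i}}$ was essentially designed so that this identification holds. The one subtlety worth noting is that in $\Bellman^i_{(*,\pol^{-i})}$ the maximization is over $\act^i$ \emph{after} averaging out $\act^{-i}$ against $\pol^{-i}$, which is exactly the structure $\max_{\act^i}[(1-\discount)\rew^i_{\pol^{-i}}(\st,\act^i) + \discount\sum_{\st'}\Trans^i_{\pol^{-i}}(\st'|\st,\act^i)\valf(\st')]$ of the MDP's optimality operator; I would make this linearity-of-expectation step explicit to avoid any confusion about the order of $\max$ and $\sum$.
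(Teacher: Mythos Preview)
Your proposal is correct and matches the paper's approach exactly: the paper simply notes that the Bellman operators $\Bellman^i_{(\pol^i,\pol^{-i})}$ and $\Bellman^i_{(*,\pol^{-i})}$ for the game coincide with the Bellman operators of the MDP $\MDP^i_{\pol^{-i}}$, and then obtains the corollary by combining Propositions~\ref{prop:optimal-strategy-MDP} and~\ref{prop:MPE-charac}. Your write-up is, if anything, more explicit about the operator identification than the paper itself.
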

\begin{proof}

 This is an immediate consequence of the definition of $\MDP^i_{\pol^{-i}}$. For the sake of completeness, we provide a formal proof. 
  Arbitrarily pick a player $i \in \playerset$ and consider any Markov policy $\tilde \pol^i$ for player~$i$. The Bellman operator $\Bellman^i_{(\tilde \pol^i, \pol^{-i})}$ of game~$\game$ is the same as the Bellman operator for evaluating policy $\tilde \pol^i$ in MDP~$\MDP^i_{\pol^{-i}}$. Thus, the value function $\Valf_{\tilde \pol^i, \pol^{-i}}$ (which is the fixed point of $\Bellman^i_{(\tilde \pol^i, \pol^{-i})}$) is equal to the value of policy $\tilde \pol^i$ in MDP~$\MDP^i_{\pol^{-i}}$. Now,  we prove the two directions separately.
  \begin{enumerate}
      \item[($\Rightarrow$)] Suppose $\pol$ is an MPE of game $\game$. By the definition of MPE, for any player $i \in \playerset$ and any policy $\tilde \pol^i$, we have $\Valf_{(\pol^i, \pol^{-i})}(\st) \ge \Valf_{(\tilde \pol^i, \pol^{-i})}(\st)$, for all $\st \in \StSp$. This means that in MDP~$\MDP^i_{\pol^{-i}}$, the performance of policy $\pol^i$ is at least as good as the performance of any other policy $\tilde \pol^i$. Hence, policy $\pol^i$ is optimal for MDP~$\MDP^i_{\pol^{-i}}$. 
      \item[$(\Leftarrow$)] Suppose for all player~$i \in \playerset$, the policy $\pol^i$ is optimal for MDP~$\MDP^i_{\pol^{-i}}$. This means that for any other policy $\tilde \pol^i$ for player~$i$, the performance of policy $\pol^i$ in MDP~$\MDP^i_{\pol^{-i}}$ is at least as good as the performance of policy $\tilde \pol^i$. Thus, we have $\Valf_{(\pol^i, \pol^{-i})}(\st) \ge \Valf_{(\tilde \pol^i, \pol^{-i})}(\st)$, for all $\st \in \StSp$. Since this is true for every player~$i \in \playerset$, the policy $\pol$ is an MPE. 
      \qed
  \end{enumerate}

\end{proof}

Similarly, by combining Propositions~\ref{prop:approx-optimal-strategy-MDP} and~\ref{prop:approxMPE-charac}, we have the following:
\begin{corollary}\label{cor:alpha-MPE-MDP}
  Given approximate levels $\alpha \coloneqq (\alpha^i)_{i \in \playerset}$, $\alpha^i \in \reals_{\ge 0}$, a Markov strategy profile $\pol \coloneqq \polAll$, is an $\alpha$-MPE if and only if for every $i \in \playerset$, the strategy $\pol^i$ is an $\alpha^i$-optimal strategy for MDP $\MDP^i_{\pol^{-i}}$.
\end{corollary}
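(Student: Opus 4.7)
The plan is to view the corollary as a player-by-player reduction of a multi-agent equilibrium condition to a family of single-agent optimality conditions. The key observation is that when the opponents' profile $\pol^{-i}$ is held fixed, the two game-theoretic Bellman operators $\Bellman^i_{(\pol^i, \pol^{-i})}$ and $\Bellman^i_{(*, \pol^{-i})}$ of the game $\game$ coincide, respectively, with the policy-evaluation operator $\Bellman_{\pol^i}$ and the Bellman optimality operator $\Bellman_*$ of the MDP $\MDP^i_{\pol^{-i}}$ defined by~\eqref{eq:MDP-Trans} and~\eqref{eq:MDP-rew}. This identification is a one-line algebraic consequence of marginalizing $\act^{-i}$ against $\pol^{-i}(\cdot\mid\st)$ in the game's one-step update, and it is essentially already recorded in the derivation leading to Corollary~\ref{cor:MPE-MDP}.

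Once this identification is in place, the corollary is an immediate splicing of Propositions~\ref{prop:approx-optimal-strategy-MDP} and~\ref{prop:approxMPE-charac}. By Proposition~\ref{prop:approxMPE-charac}, $\pol$ is an $\alpha$-MPE if and only if, for every $i \in \playerset$, the fixed points $\Valf^i_\pol$ of $\Bellman^i_{(\pol^i,\pol^{-i})}$ and $\Valf^i_{(*,\pol^{-i})}$ of $\Bellman^i_{(*,\pol^{-i})}$ satisfy $\Valf^i_\pol \ge \Valf^i_{(*,\pol^{-i})} - \alpha^i$. By the operator identification above, $\Valf^i_\pol$ is precisely the value function of strategy $\pol^i$ in the MDP $\MDP^i_{\pol^{-i}}$, while $\Valf^i_{(*,\pol^{-i})}$ is the optimal value function of the same MDP. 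Proposition~\ref{prop:approx-optimal-strategy-MDP} applied to $\MDP^i_{\pol^{-i}}$ then states that this inequality is equivalent to $\pol^i$ being an $\alpha^i$-optimal strategy for $\MDP^i_{\pol^{-i}}$, which is exactly the statement to be proved. Because both propositions are genuine "if and only if" statements, both directions of the corollary follow simultaneously.

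There is no substantive obstacle in this argument; the result is a bookkeeping statement that upgrades Corollary~\ref{cor:MPE-MDP} from the exact case $\alpha = 0$ to arbitrary approximation vectors $\alpha = (\alpha^i)_{i \in \playerset}$. The only point that requires a brief check is that the reduction is carried out player-by-player with the opponents' profile $\pol^{-i}$ frozen, so that the auxiliary single-agent problem is genuinely an MDP rather than a game; this is built into the definition of $\MDP^i_{\pol^{-i}}$ through~\eqref{eq:MDP-Trans}--\eqref{eq:MDP-rew}.
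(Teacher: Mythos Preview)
Your proposal is correct and mirrors the paper's own justification: the paper states this corollary as an immediate consequence of combining Propositions~\ref{prop:approx-optimal-strategy-MDP} and~\ref{prop:approxMPE-charac}, exactly as you do via the identification of the game's Bellman operators $\Bellman^i_{(\pol^i,\pol^{-i})}$ and $\Bellman^i_{(*,\pol^{-i})}$ with the policy-evaluation and optimality operators of $\MDP^i_{\pol^{-i}}$.
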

\begin{proof}

The proof argument is almost the same as the proof of Corollary~\ref{cor:MPE-MDP}.
  As argued in the proof of Corollary~\ref{cor:MPE-MDP}, the value function $\Valf_{\tilde \pol^i, \pol^{-i}}$ (which is the fixed point of $\Bellman^i_{(\tilde \pol^i, \pol^{-i})}$) is equal to the value of policy $\tilde \pol^i$ in MDP~$\MDP^i_{\pol^{-i}}$. Now,  we prove the two directions separately.
  \begin{enumerate}
      \item[($\Rightarrow$)] Suppose $\pol$ is an $\alpha$-MPE of game $\game$. By the definition of MPE, for any player $i \in \playerset$ and any policy $\tilde \pol^i$, we have $\Valf_{(\pol^i, \pol^{-i})}(\st) \ge \Valf_{(\tilde \pol^i, \pol^{-i})}(\st) - \alpha^i$, for all $\st \in \StSp$. This means that in MDP~$\MDP^i_{\pol^{-i}}$, the performance of policy $\pol^i$ is at least as good as the performance of any other policy $\tilde \pol^i$ minus $\alpha^i$. Hence, policy $\pol^i$ is $\alpha^i$-optimal for MDP~$\MDP^i_{\pol^{-i}}$. 
      \item[$(\Leftarrow$)] Suppose for all player~$i \in \playerset$, the policy $\pol^i$ is $\alpha^i$-optimal for MDP~$\MDP^i_{\pol^{-i}}$. This means that for any other policy $\tilde \pol^i$ for player~$i$, the performance of policy $\pol^i$ in MDP~$\MDP^i_{\pol^{-i}}$ is at least as good as the performance of policy $\tilde \pol^i$ minus $\alpha^i$. Thus, we have $\Valf_{(\pol^i, \pol^{-i})}(\st) \ge \Valf_{(\tilde \pol^i, \pol^{-i})}(\st) - \alpha^i$, for all $\st \in \StSp$. Since this is true for every player~$i \in \playerset$, the policy $\pol$ is an $\alpha$-MPE. \qed
  \end{enumerate}

\end{proof}
\subsection{Relationship between MDPs corresponding to a strategy profile}
We first provide a preliminary result.
\begin{lemma}\label{lem:Delta-bound}
  For any function $f \colon \StSp \to \reals$, transitions $\Trans,
    \arbitraryTrans : \StSp \times \ActSpAll \to
  \Delta(\StSp)$, player~$i \in \playerset$, strategy $\pol^{-i}$ for
  players other than~$i$, $(\st, \act^i) \in \StSp \times \ActSp^i$ and transitions $\Trans_{\pol^{-i}}, \arbitraryTrans_{\pol^{-i}}: \StSp^i \times \ActSp^i \to \Delta(\StSp^i)$ defined as in~\eqref{eq:MDP-Trans},  we have
  \begin{align*}
    \hskip 2em & \hskip -2em
 \biggl\lvert \sum_{\st' \in \StSp} f(\st')\Trans^i_{\pol^{-i}}(\st'|\st,\act^i) 
      - \sum_{\st' \in \StSp} f(\st')\arbitraryTrans^i_{\pol^{-i}}(\st'|\st,\act^i) \biggr\rvert 
    \\
    & \le
    \max_{\act^{-i} \in \ActSp^{-i}} 
    \biggl\lvert \sum_{\st' \in \StSp} f(\st') \Trans(\st'|\st,(\act^i, \act^{-i})) -\sum_{\st' \in \StSp} f(\st')  \arbitraryTrans(\st'|\st,(\act^i, \act^{-i})) \biggr\lvert.
  \end{align*}
  Therefore,
  \begin{align*}
    \hskip 2em & \hskip -2em
    \max_{\st \in \StSp, \act^i \in \ActSp^i}
 \biggl\lvert \sum_{\st' \in \StSp} f(\st')\Trans^i_{\pol^{-i}}(\st'|\st,\act^i) 
      - \sum_{\st' \in \StSp} f(\st')\arbitraryTrans^i_{\pol^{-i}}(\st'|\st,\act^i) \biggr\rvert 
    \\
    & \le
      \max_{\st \in \StSp, (\act^i, \act^{-i}) \in \ActSp} 
      \biggl\lvert \sum_{\st' \in \StSp} f(\st') \Trans(\st'|\st,(\act^i, \act^{-i})) - \sum_{\st' \in \StSp} f(\st') \arbitraryTrans(\st'|\st,(\act^i, \act^{-i})) \biggr\lvert.
  \end{align*}
\end{lemma}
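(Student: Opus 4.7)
The plan is to unfold the definition of $\Trans^i_{\pol^{-i}}$ from~\eqref{eq:MDP-Trans} and exploit the fact that $\pol^{-i}(\cdot \mid \st)$ is a probability distribution over $\ActSp^{-i}$, so that the LHS becomes a convex combination of the quantities appearing on the RHS. First, I would substitute~\eqref{eq:MDP-Trans} (applied to both $\Trans$ and $\approxTrans$) into the left-hand side and swap the order of the sums over $\st' \in \StSp$ and $\act^{-i} \in \ActSp^{-i}$, which is permissible since both sums are finite. This rewrites the expression inside the absolute value as
\[
  \sum_{\act^{-i} \in \ActSp^{-i}} \pol^{-i}(\act^{-i} \mid \st) \sum_{\st' \in \StSp} f(\st') \bigl[ \Trans(\st' \mid \st, (\act^i, \act^{-i})) - \approxTrans(\st' \mid \st, (\act^i, \act^{-i})) \bigr].
\]

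Next, I would apply the triangle inequality, pulling the absolute value inside the sum over $\act^{-i}$; because $\pol^{-i}(\act^{-i} \mid \st) \ge 0$, the weights come out directly, and I obtain a convex combination (over $\act^{-i}$) of the quantities
\[
  \biggl\lvert \sum_{\st' \in \StSp} f(\st') \bigl[ \Trans(\st' \mid \st, (\act^i, \act^{-i})) - \approxTrans(\st' \mid \st, (\act^i, \act^{-i})) \bigr] \biggr\rvert.
\]
Since $\sum_{\act^{-i}} \pol^{-i}(\act^{-i} \mid \st) = 1$, any convex combination is bounded above by the maximum over $\act^{-i} \in \ActSp^{-i}$, which yields the first claimed inequality.

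For the second inequality, I would take the maximum of the first inequality over $(\st, \act^i) \in \StSp \times \ActSp^i$. The left-hand side becomes the max over $(\st, \act^i)$ directly. For the right-hand side, the maximum over $(\st, \act^i)$ of a maximum over $\act^{-i}$ is the same as the joint maximum over $\st \in \StSp$ and $\act = (\act^i, \act^{-i}) \in \ActSp^i \times \ActSp^{-i} = \ActSp$, which is precisely the stated bound. There is no real obstacle here: the argument is a one-line application of the triangle inequality combined with the observation that a weighted average is dominated by the maximum; the only minor point to be careful about is the bookkeeping of the order of summation and the fact that $\pol^{-i}$ is normalized to a probability distribution.
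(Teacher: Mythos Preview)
Your proposal is correct and follows essentially the same approach as the paper: expand $\Trans^i_{\pol^{-i}}$ and $\approxTrans^i_{\pol^{-i}}$ via~\eqref{eq:MDP-Trans}, interchange the finite sums, apply the triangle inequality using $\pol^{-i}(\act^{-i}\mid\st)\ge 0$, and then bound the resulting convex combination by the maximum over $\act^{-i}$; the second part is obtained by taking the maximum over $(\st,\act^i)$ exactly as you describe.
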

\begin{proof}
  For the first part, from definition of $\arbitraryTrans^i_{\pol^{-i}}$, we have
  \begin{align*}
    \hskip 2em & \hskip -2em
      \biggl\lvert \sum_{\st' \in \StSp} f(\st')\Trans^i_{\pol^{-i}}(\st'|\st,\act^i) 
      - \sum_{\st' \in \StSp} f(\st')\arbitraryTrans^i_{\pol^{-i}}(\st'|\st,\act^i) \biggr\rvert 
      \displaybreak[1] \notag \\
      &=   
      \biggl\lvert \sum_{\st' \in \StSp} \sum_{\act^{-i} \in \ActSp^{-i}} f(\st')\pol^{-i}(\act^{-i}|\st)\Trans(\st'|\st,(\act^i, \act^{-i})) \notag \\ 
      & \hskip 6em - \sum_{\st' \in \StSp}\sum_{\act^{-i} \in \ActSp^{-i}} f(\st')\pol^{-i}(\act^{-i}|\st)\arbitraryTrans(\st'|\st,(\act^i, \act^{-i})) \biggr\rvert 
      \displaybreak[1] \notag \\
      & \le   \biggl\lvert \sum_{\act^{-i} \in \ActSp^{-i}} \pol^{-i}(\act^{-i}|\st)  \notag \\
      & \qquad \times \biggl[ \sum_{\st' \in \StSp} f(\st') (\Trans(\st'|\st,(\act^i, \act^{-i})) - \arbitraryTrans(\st'|\st,(\act^i, \act^{-i}))) \biggr] \biggr\lvert \displaybreak[1] \notag \\
      & \le   \sum_{\act^{-i} \in \ActSp^{-i}} \pol^{-i}(\act^{-i}|\st) \notag \\
      & \qquad \times \biggl\lvert \sum_{\st' \in \StSp} f(\st') (\Trans(\st'|\st,(\act^i, \act^{-i})) - \arbitraryTrans(\st'|\st,(\act^i, \act^{-i}))) \biggr\lvert 
      \displaybreak[1] \notag \\
      & \le\sum_{\act^{-i} \in \ActSp^{-i}} \pol^{-i}(\act^{-i}|\st) \notag \\
      & \quad \times \max_{\tilde \act^{-i} \in \ActSp^{-i}} 
      \biggl\lvert \sum_{\st' \in \StSp} f(\st') (\Trans(\st'|\st,(\act^i, \tilde \act^{-i})) - \arbitraryTrans(\st'|\st,(\act^i, \tilde \act^{-i}))) \biggr\lvert 
      \notag \\
      &= \max_{ \tilde \act^{-i} \in \ActSp^{-i}} 
      \biggl\lvert \sum_{\st' \in \StSp} f(\st') (\Trans(\st'|\st,(\act^i, \tilde \act^{-i})) - \arbitraryTrans(\st'|\st,(\act^i, \tilde \act^{-i}))) \biggr\lvert .
  \end{align*}
  The second part following by taking a maximum over $(\st, \act^i)$.
  \qed
\end{proof}

Suppose we are given a game $\game$ and its $\aiserror$ approximation $\approxgame$. Moreover, suppose $\approxpol \coloneqq \polAllApprox$ is an MPE of $\approxgame$.

Let $\{\approxMDP^i_{\approxpol^{-i}}\}$ be the MDPs corresponding to game $\approxgame$ and strategy $\approxpol$. Similarly, let $\{\MDP^i_{\approxpol^{-i}}\}$ be the MDPs corresponding to game $\game$ and strategy $\approxpol$. An immediate implication of Lemma~\ref{lem:Delta-bound} is the following.
\begin{lemma}\label{lem:MDPs-relation}
For any player $i \in \playerset$, MDP $\approxMDP^i_{\approxpol^{-i}}$ is an $\aiserror$ approximation of MDP $\MDP^i_{\approxpol^{-i}}$.
\end{lemma}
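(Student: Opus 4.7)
The plan is to verify the two conditions of Definition~\ref{def:approxMDP} directly for the MDPs $\approxMDP^i_{\approxpol^{-i}}$ and $\MDP^i_{\approxpol^{-i}}$, using the definitions in~\eqref{eq:MDP-Trans} and~\eqref{eq:MDP-rew} together with Lemma~\ref{lem:Delta-bound}. Both MDPs share the same state space $\StSp$, action space $\ActSp^i$, and discount factor~$\discount$, so only the reward and transition conditions need to be checked.

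For the reward approximation, the idea is to observe that $\rew^i_{\approxpol^{-i}}(\st, \act^i)$ and $\hat{\rew}^i_{\approxpol^{-i}}(\st, \act^i)$ are both expectations of $\rew^i(\st, (\act^i, \act^{-i}))$ and $\hat{\rew}^i(\st, (\act^i, \act^{-i}))$ under the common distribution $\approxpol^{-i}(\cdot | \st)$. Since $\approxgame$ is an $\aiserror$-approximation of $\game$, the assumption~\eqref{eq:game-rew-approx} gives a uniform pointwise bound of $\varepsilon$ on the difference of the two reward functions. Then the triangle inequality applied to the convex combination over $\act^{-i}$ yields $\lvert \rew^i_{\approxpol^{-i}}(\st, \act^i) - \hat{\rew}^i_{\approxpol^{-i}}(\st, \act^i) \rvert \le \varepsilon$ for every $(\st, \act^i)$.

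For the transition approximation, I would apply Lemma~\ref{lem:Delta-bound} pointwise in $f$. Fixing $(\st, \act^i)$ and any $f \in \F$, Lemma~\ref{lem:Delta-bound} yields
\begin{equation*}
  \biggl\lvert \sum_{\st' \in \StSp} f(\st') \bigl[\Trans^i_{\approxpol^{-i}}(\st'|\st,\act^i)  - \approxTrans^i_{\approxpol^{-i}}(\st'|\st,\act^i)\bigr] \biggr\rvert
  \le \max_{\act^{-i} \in \ActSp^{-i}}
    \biggl\lvert \sum_{\st' \in \StSp} f(\st') \bigl[\Trans(\st'|\st,(\act^i,\act^{-i})) - \approxTrans(\st'|\st,(\act^i,\act^{-i}))\bigr] \biggr\rvert.
\end{equation*}
Taking the supremum over $f \in \F$ on both sides and swapping the sup with the (finite) maximum over $\act^{-i}$, the left-hand side becomes $d_\F(\Trans^i_{\approxpol^{-i}}(\cdot|\st,\act^i), \approxTrans^i_{\approxpol^{-i}}(\cdot|\st,\act^i))$ and the right-hand side becomes $\max_{\act^{-i}} d_\F(\Trans(\cdot|\st,(\act^i,\act^{-i})), \approxTrans(\cdot|\st,(\act^i,\act^{-i})))$, which is bounded by $\delta$ thanks to~\eqref{eq:game-Trans-approx}.

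There is no real obstacle here: the argument is essentially bookkeeping once Lemma~\ref{lem:Delta-bound} is in hand. The only mildly subtle point is that the IPM is defined as a supremum over $\F$ and Lemma~\ref{lem:Delta-bound} is stated for a single function, so one must be explicit about interchanging $\sup_{f \in \F}$ with $\max_{\act^{-i} \in \ActSp^{-i}}$; this is legitimate because both operations are taken over a product set of non-negative quantities.
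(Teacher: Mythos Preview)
Your proposal is correct and follows essentially the same approach as the paper: the reward bound is obtained by averaging the pointwise bound~\eqref{eq:game-rew-approx} under $\approxpol^{-i}(\cdot\mid\st)$, and the transition bound is obtained by combining Lemma~\ref{lem:Delta-bound} with the definition of the IPM and~\eqref{eq:game-Trans-approx}. The only cosmetic difference is that the paper carries the outer $\max_{\st,\act^i}$ through the whole chain of inequalities, whereas you fix $(\st,\act^i)$ first; the interchange of $\sup_{f\in\F}$ with $\max_{\act^{-i}}$ is valid simply because iterated suprema commute.
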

\begin{proof}
  Consider 
  \begin{align}\label{eq:rew-MDP-relation}
    \hskip 2em & \hskip -2em 
      \lvert \rew^i_{\approxpol^{-i}}(\st, \act^i) - \approxrew^i_{\approxpol^{-i}}(\st, \act^i) \rvert \notag \\ &\stackrel{(a)}{\le} \sum_{\act^{-i} \in \ActSp^{-i}} \approxpol^{-i}(\act^{-i}|\st) \lvert \rew^i(\st, (\act^i, \act^{-i})) - \approxrew^i(\st, (\act^i, \act^{-i})) \rvert \notag \\
      & \stackrel{(b)}{\le} \sum_{\act^{-i} \in \ActSp^{-i}} \approxpol^{-i}(\act^{-i}|\st) \varepsilon \notag \\
      & \stackrel{(c)}{=} \varepsilon,
  \end{align}
  where $(a)$ follows from~\eqref{eq:MDP-rew}, $(b)$ follows from~\eqref{eq:game-rew-approx} and $(c)$ follows as $\varepsilon$ is independent of $\act^{-i}$.
  Furthermore,
  \begin{align}\label{eq:Trans-MDP-relation}
    \hskip 1em & \hskip -1em
       \max_{\st \in \StSp, \act^i \in \ActSp^i}  d_\F(\Trans^i_{\approxpol^{-i}}(\cdot|\st, \act^i),\approxTrans^i_{\approxpol^{-i}}(\cdot|\st, \act^i)) \notag \\
       &\stackrel{(d)}{=}\sup_{f \in \F}  \max_{\st \in \StSp, \act^i \in \ActSp^i} \biggl\lvert \sum_{\st' \in \StSp} f(\st')\Trans^i_{\approxpol^{-i}}(\st'|\st,\act^i) 
      - \sum_{\st' \in \StSp} f(\st')\approxTrans^i_{\approxpol^{-i}}(\st'|\st,\act^i) \biggr\rvert 
      \displaybreak[1] \notag \\
      & \stackrel{(e)}{\le}\sup_{f \in \F}  \max_{\substack{\st \in \StSp \\ (\act^i, \act^{-i}) \in \ActSp}}
      \biggl\lvert \sum_{\st' \in \StSp} f(\st') (\Trans(\st'|\st,(\act^i, \act^{-i})) - \approxTrans(\st'|\st,(\act^i, \act^{-i}))) \biggr\lvert 
      \displaybreak[1] \notag \\
      & \stackrel{(f)}= \max_{\substack{\st \in \StSp \\ (\act^i, \act^{-i}) \in \ActSp}} 
      d_\F(\Trans(\cdot|\st, (\act^i, \act^{-i})),\approxTrans(\cdot|\st, (\act^i,\act^{-i}))) \notag \\
      & \stackrel{(g)}{=}\delta
  \end{align}
  where $(d)$ and $(f)$ follows from Definition~\ref{def:ipm}, $(e)$ follows from Lemma~\ref{lem:Delta-bound}, and $(g)$ follows from Definition~\ref{def:approx-game}.
  
  Equations~\eqref{eq:rew-MDP-relation} and \eqref{eq:Trans-MDP-relation} imply that MDP $\approxMDP^i_{\approxpol^{-i}}$ is an $\aiserror$-approximation of $\MDP^i_{\approxpol^{-i}}$ (see Definition~\ref{def:approxMDP}).
  \qed
\end{proof}

\subsection{Proof of Theorem~\ref{thm:main}}
Arbitrarily fix a player $i \in \playerset$. Then, we have the following.
\begin{enumerate}
    \item From Corollary~\ref{cor:alpha-MPE-MDP}, since $\approxpol$ is an $\alpha_{\mathrm{opt}}$-MPE of $\approxgame$, we have that the strategy $\approxpol^i$ is $\alpha_{\mathrm{opt}}$-optimal for MDP $\approxMDP^i_{\approxpol^{-i}}$.
    \item From Lemma~\ref{lem:MDPs-relation}, we know that MDP $\approxMDP^i_{\approxpol^{-i}}$ is an $\aiserror$ approximation of MDP $\MDP^i_{\approxpol^{-i}}$. Then, by Theorem~\ref{thm:MDP-approx}, we get that strategy $\approxpol^i$ is an $\alpha^i$-optimal strategy for MDP $\MDP^i_{\approxpol^{-i}}$, where $\alpha^i$ is given by Theorem~\ref{thm:main}. Lemma~\ref{lem:Delta-bound} shows that
    \[
    \max_{\st \in \StSp, \act^i \in \ActSp^i}
 \biggl\lvert \sum_{\st' \in \StSp} \approxValf^{i}_{(\approxpol^i, \approxpol^{-i})}(\st') \Trans^i_{\approxpol^{-i}}(\st'|\st,\act^i) 
      - \sum_{\st' \in \StSp} \approxValf^{i}_{(\approxpol^i, \approxpol^{-i})}(\st') \approxTrans^i_{\approxpol^{-i}}(\st'|\st,\act^i) \biggr\rvert
    \]
      is upper bounded by $\Delta^i_{(\approxpol^i,\approxpol^{-i})}$ given in Theorem~\ref{thm:main}.
      By a similar argument, we can show that
    \[
    \max_{\st \in \StSp, \act^i \in \ActSp^i}
 \biggl\lvert \sum_{\st' \in \StSp} \approxValf^{i}_{(\tilde \pol^i_*, \approxpol^{-i})}(\st') \Trans^i_{\approxpol^{-i}}(\st'|\st,\act^i) 
      - \sum_{\st' \in \StSp} \approxValf^{i}_{(\tilde \pol^i_*, \approxpol^{-i})}(\st') \approxTrans^i_{\approxpol^{-i}}(\st'|\st,\act^i) \biggr\rvert
    \]
      is upper bounded by $\Delta^i_{(\tilde \pol^i_*,\approxpol^{-i})}$ given in Theorem~\ref{thm:main}.
    \item Since the above results hold for all $i \in \playerset$, Corollary~\ref{cor:alpha-MPE-MDP} implies that strategy profile $\approxpol$ is an $\alpha$-MPE of $\game$, where $\alpha \coloneqq (\alpha^i)_{i \in \playerset}$ and $\alpha^i$ is given by Theorem~\ref{thm:main}.
    \item The specific formulas for $\alpha$ in Corollaries~\ref{cor:TV} and~\ref{cor:W} follow from Corollaries~\ref{cor:ais-tv} and \ref{cor:ais-w}.
\end{enumerate}

\subsection{Proofs of Theorem~\ref{thm:MARL} and Theorem~\ref{thm:MARL2}}
The proof argument for Theorems~\ref{thm:MARL} and~\ref{thm:MARL2} are similar, so we prove them together. Theorems~\ref{thm:generative} and~\ref{thm:generative-kakade} show that for any $p > 0$, $\alpha_{\mathrm{opt}} > 0$ and $\alpha > \frac{1+\discount}{1-\discount}\alpha_{\mathrm{opt}}$ which satisfies appropriate conditions, there exists a function $N(p)$, such that for $n \ge N(p)$ any
$\alpha_{\mathrm{opt}}$-optimal policy $\approxpol_n$ of MDP $\approxMDP_n$ is an $\alpha$-optimal for MDP $\MDP$.

Now consider the approximate game $\approxgame_n$ and let $\pol$ be an $\alpha_{\mathrm{opt}}$-MPE of game $\approxgame$. For any player~$i \in \playerset$, let $\mathcal{E}^i_p$ denote the event that policy $\pol^i$ is \emph{not} $\alpha$-optimal for $\MDP^i_{\pol^{-i}}$. The above restatement of Theorems~\ref{thm:generative} and~\ref{thm:generative-kakade} imply that for $n \ge N(p/|\playerset|)$, 
\(
  \PR(\mathcal{E}^i_{p/|\playerset|}) \le p/|\playerset|.
\)
Therefore, by the union bound, 
\[
  \PR\biggl( \bigcup_{i \in \playerset} \mathcal{E}^i_{p/|\playerset|} \biggr) 
  \le \sum_{i \in \playerset} \PR(\mathcal{E}^i_{p/|\playerset|})
  \le p.
\]
Thus, for $n \ge N(p/|\playerset|)$, the policy $\pol^i$ is $\alpha$-optimal for MDP~$\MDP^i_{\pol^{-i}}$, for all $i \in \playerset$. Thus, by Corollary~\ref{cor:alpha-MPE-MDP}, $\pol$ is an $\alpha$-MPE of game~$\game$. 

The results of Theorems~\ref{thm:MARL} and~\ref{thm:MARL2} then follow by substituting the expressions for $N(p/|\playerset|)$ from Theorems~\ref{thm:generative} and~\ref{thm:generative-kakade}, respectively.

\section{Numerical examples} \label{sec:numerical_examples}
In this section, we present two numerical examples to demonstrate the main results of Theorem~\ref{thm:main} and~\ref{thm:MARL}.
\subsection{Robustness of Markov perfect equilibrium}\label{sec:example}
Consider a setting where $\playerset=\{1,2\}$, $\StSp = \{1,2,3\}$, $\ActSp_1 = \ActSp_2 = \{1,2\}$, and $\discount = 0.9$. We consider two games: original game $\game$ and approximate game $\approxgame$ which differ in their reward functions and transition matrices. We describe the transition matrices as $\{\Trans(\act)\}_{\act \in \ActSp}$, where $\Trans(\act) = [\Trans(\st' \mid \st, \act)]_{\st, \st' \in \StSp}$ and describe the reward functions as $\{\rew(\st)\}_{\st \in \StSp}$ where $\rew(\st)$ is the bi-matrix $[(\rew_1(\st, (\act_1, \act_2)),\rew_2(\st, (\act_1, \act_2)))]_{(\act_1, \act_2) \in \ActSp}$. 

For the original game $\game$, we have
\begin{gather*}
\rew(1) = \begin{bmatrix}
  (1.0, 0.4) & (0.7, 1.0)\\
  (0.3, 1.0) & (0.8, 0.7)
\end{bmatrix}, 
\quad
\rew(2) = \begin{bmatrix}
  (0.6, 0.7) & (0.7, 0.6)\\
  (0.3, 0.8) & (0.2, 0.2)
\end{bmatrix} ,
\\
\rew(3) = \begin{bmatrix}
  (0.2, 0.6) & (0.1, 0.7)\\
  (0.6, 0.7) & (0.5, 0.3)
\end{bmatrix},
\end{gather*}
and
\begin{align*}
\Trans((1,1)) &= \begin{bmatrix}
0.40 & 0.40 & 0.20\\
0.10 & 0.50 & 0.40\\
0.40 & 0.10 & 0.50
\end{bmatrix},
&
\Trans((1,2)) &= \begin{bmatrix}
0.30 & 0.40 & 0.30\\
0.20 & 0.20 & 0.60\\
0.30 & 0.35 & 0.35
\end{bmatrix},
\\
\Trans((2,1)) &= \begin{bmatrix}
0.25 & 0.25 & 0.50\\
0.30 & 0.30 & 0.40\\
0.20 & 0.20 & 0.60
\end{bmatrix},
&
\Trans((2,2)) &= \begin{bmatrix}
0.10 & 0.20 & 0.70\\
0.20 & 0.10 & 0.70\\
0.40 & 0.20 & 0.40
\end{bmatrix}.
\end{align*}

For the approximate game $\approxgame$, we have
\begin{gather*}
\approxrew(1) = \begin{bmatrix}
  (0.99, 0.40) & (0.69, 1.00)\\
  (0.30, 0.99) & (0.81, 0.71)
\end{bmatrix}, 
\quad
\approxrew(2) = \begin{bmatrix}
  (0.59, 0.70) & (0.69, 0.61)\\
  (0.30, 0.80) & (0.19, 0.21)
\end{bmatrix} ,
\\
\approxrew(3) = \begin{bmatrix}
  (0.19, 0.59) & (0.09, 0.70)\\
  (0.59, 0.69) & (0.50, 0.30)
\end{bmatrix},
\end{gather*}
and
\begin{gather*}
\approxTrans((1,1)) = \begin{bmatrix}
0.45 & 0.35 & 0.20\\
0.15 & 0.45 & 0.40\\
0.45 & 0.10 & 0.45
\end{bmatrix},
\quad
\approxTrans((1,2)) = \begin{bmatrix}
0.25 & 0.45 & 0.30\\
0.25 & 0.15 & 0.60\\
0.35 & 0.30 & 0.35
\end{bmatrix},
\\
\approxTrans((2,1)) = \begin{bmatrix}
0.25 & 0.30 & 0.45\\
0.35 & 0.30 & 0.35\\
0.25 & 0.20 & 0.55
\end{bmatrix},
\quad
\approxTrans((2,2)) = \begin{bmatrix}
0.15 & 0.15 & 0.70\\
0.25 & 0.10 & 0.65\\
0.40 & 0.25 & 0.35
\end{bmatrix}.
\end{gather*}

A MPE of $\approxgame$ and the corresponding value functions (computed by solving a non-linear program as described in \cite{filar1991nonlinear}) are as follows:
\begin{align}
  \approxpol^1 &= \begin{bmatrix} 
    0.33 & 0.67 \\
    1.00 & 0.00 \\
    0.00 & 1.00
  \end{bmatrix},
  &
  \approxpol^2 &= \begin{bmatrix} 
    0.13 & 0.87 \\
    1.00 & 0.00 \\
    1.00 & 0.00
  \end{bmatrix},
  \label{eq:example-MPE}
  \\
  \approxValf^1_{\approxpol} &= \begin{bmatrix} 
    0.6327  \\
    0.6170  \\
    0.6187
  \end{bmatrix},
  &
  \approxValf^2_{\approxpol} &= \begin{bmatrix} 
    0.7258  \\
    0.7148  \\
    0.7148
  \end{bmatrix}.
\end{align}

In~\eqref{eq:example-MPE}, the strategy is described as $\approxpol^i = [ \approxpol^i(\act^i | \st) ]_{\st \in \StSp, \act^i \in \ActSp^i}$. 
For strategy $\approxpol$ in~\eqref{eq:example-MPE}, we compute the value functions $\Valf^i_{\approxpol}$ for game $\game$ as described in Proposition~\ref{prop:MPE-charac} 
and the value functions $\Valf^i_{(*, \approxpol^{-i})}$ as described in Proposition~\ref{prop:approxMPE-charac} (see Sec.~\ref{sec:proof}). These are given by
\begin{align}
  \Valf^1_{\approxpol} &= \begin{bmatrix} 
    0.6341  \\
    0.6192  \\
    0.6209
  \end{bmatrix},
  &
  \Valf^2_{\approxpol} &= \begin{bmatrix} 
    0.7252  \\
    0.7142  \\
    0.7154
  \end{bmatrix},
  %\label{eq:example-value}
  \\
  \Valf^1_{(*,\approxpol^2)} &= \begin{bmatrix} 
    0.6394  \\
    0.6222  \\
    0.6241
  \end{bmatrix},
  &
  \Valf^2_{(\approxpol^1, *)} &= \begin{bmatrix} 
    0.7280  \\
    0.7158  \\
    0.7171
  \end{bmatrix}.
  \label{eq:example-BR}
\end{align}
Note that 

\begin{subequations}\label{eq:example-bound}
\begin{align}
  \alpha^1_* &= \| \Valf^1_{(*, \approxpol^2)} - \Valf^1_{\approxpol} \|_\infty  = 0.005300,
  \\
  \alpha^2_* &= \| \Valf^2_{(\approxpol^1, *)} - \Valf^2_{\approxpol} \|_\infty  = 0.002785.
\end{align}
\end{subequations}
Thus, $\approxpol$ is a $(0.005300,0.002785)$-MPE of $\game$.

Now, we compare $\alpha_*$ with the bounds that we obtain using Theorem~\ref{thm:main}. 
Note that since $\alpha_{\mathrm{opt}} = 0$, $\tilde \pol^i$ defined in Theorem~\ref{thm:main} is equal to~$\approxpol^i$. Therefore, the first upper bound on $\alpha$ is given by $2\varepsilon + 2\discount \Delta^i_{\approxpol}/(1-\discount)$.
Note that 
\[
  \max_{\act \in \ActSp} \max_{\st \in \StSp} \lvert \rew(\st, \act) - \approxrew(\st, \act) \rvert = 0.01.
\]
Thus, $\varepsilon = 0.01$. Moreover,
\begin{align*}
  \Delta^1_{\approxpol} &= 
  \max_{\st \in \St, \act \in \Act} \biggl| \sum_{\st' \in \St} 
    \biggl[ \Trans(\st' | \st, \act) \approxValf^{1}_{\approxpol}(\st')
     - \approxTrans(\st' | \st, \act) \approxValf^{1}_{\approxpol}(\st') \biggr]
    \biggr|
  =
  0.000784
  \\
  \Delta^2_{\approxpol} &= 
  \max_{\st \in \St, \act \in \Act} \biggl| \sum_{\st' \in \St} 
    \biggl[ \Trans(\st' | \st, \act) \approxValf^{2}_{\approxpol}(\st')
     - \approxTrans(\st' | \st, \act) \approxValf^{2}_{\approxpol}(\st') \biggr]
    \biggr|
    =
  0.000550
\end{align*}
Then, by Theorem~\ref{thm:main}, we have that 
\[
  \alpha \le 2\biggl( \varepsilon + \frac{\gamma \Delta_{\approxpol}}{1- \gamma} \biggr)
  =
  2 \times 0.01 + \frac{2 \times 0.9}{0.1}
  \begin{bmatrix}
        0.000784\\
        0.000550
      \end{bmatrix}
      =
      \begin{bmatrix}
        0.034112\\
        0.029900
      \end{bmatrix}.
\]

Now, we consider the upper bound on $\Delta^i_{\approxpol}$ in terms of $\rho_F(\approxValf^i_{\approxpol})$. 
\begin{enumerate}
  \item We first consider the case when $\F = \FTV$. Note that
    \[
      \max_{\act \in \ActSp} \max_{\st \in \StSp} \TV(\Trans(\cdot | \st, \act), \approxTrans(\cdot | \st, \act)) = 0.05,\\
    \]
    Thus when $\F = \FTV$, $\approxgame$ is a $(0.01, 0.05)$-approximation of game $\game$. Also note that $\Span(\approxValf^1_{\approxpol}) = 0.015684$ and $\Span(\approxValf^2_{\approxpol}) = 0.010990$. Then, from Theorem~\ref{thm:main}, we have that
    \begin{gather*}
      \alpha \le 2\biggl( \varepsilon + \frac{\gamma \delta \Span(\approxValf^i_{\approxpol})}{1- \gamma} \biggr)
      = 2 \times 0.01 + \frac{2 \times 0.9 \times 0.05 }{0.1}
      \begin{bmatrix}
        0.015684\\
        0.010990
      \end{bmatrix}
      =
      \begin{bmatrix}
        0.034116\\
        0.029903
      \end{bmatrix}.
    \end{gather*}
    
    \medskip
  \item Now we equip the state space $\StSp$ with a metric $d$ where $d(\st, \st') = \lvert \st - \st' \rvert$ and consider the case $\F = \FW$. Note that
    \[
      \max_{\act \in \ActSp} \max_{\st \in \StSp} \W(\Trans(\cdot | \st, \act), \approxTrans(\cdot | \st, \act)) = 0.10.
    \]
    Thus when $\F = \FW$, $\approxgame$ is a $(0.01, 0.10)$-approximation of game $\game$. Also note that $\Lipschitz(\approxValf^1_{\approxpol}) = 0.015684$ and $\Lipschitz(\approxValf^2_{\approxpol}) = 0.010990$. Then, from Theorem~\ref{thm:main}, we have that
    \begin{gather*}
      \alpha \le 2\biggl( \varepsilon + \frac{\gamma \delta \Lipschitz(\approxValf^i_{\approxpol})}{1- \gamma} \biggr)
      = 2 \times 0.01 + \frac{2 \times 0.9 \times 0.10 }{0.1}
      \begin{bmatrix}
        0.015684\\
        0.010990
      \end{bmatrix}
      =
      \begin{bmatrix}
        0.048231\\
        0.039782
      \end{bmatrix}.
    \end{gather*}
\end{enumerate}
The above example shows that with the given $(\varepsilon,\delta)$, the bound of Theorem~\ref{thm:main} is loose by only a small multiplicative factor of approximately 6 to 15. 

\subsection{Sample complexity of generative models}
We now consider the setting for model based MARL. Consider the game $\game$ described in Sec.~\ref{sec:example} but suppose that the transition matrix $\Trans$ is not known but we have access to a generative model which can generate samples from $\Trans$. We assume that we are interested in identifying an $\alpha$-MPE, where $\alpha = 0.1$ with probability $1-p = 0.99$. We assume that we can exactly compute the MPE of the approximated game, i.e., $\alpha_{\mathrm{opt}} = 0$. 

From Theorem~\ref{thm:MARL}, we have an upper bound on the number $n$ of samples for each state-action pair as
\begin{align*}
    n &\ge \biggl\lceil \biggl( \frac{ \discount }{(1 - \discount)^2} \Span(r) \biggr)^2 
    \frac{ 2 \log(2 |\StSp|\bigl(\prod_{i \in \playerset}|\ActSp^i| \bigr) |\playerset| p^{-1}) }
    {\alpha^2} \biggr\rceil \\
    &= \biggl\lceil \biggl( \frac{ 0.9 }{(1 - 0.9)^2}  \times 0.9 \biggr)^2 
    \frac{ 2 \log(2 \times 3 \times 4 \times 2 / 0.01) }
    {0.1^2} \biggr\rceil = \NH.
\end{align*}

Note that $\gamma = 0.9 \ge \GAMMAN$, which is the lower bound on the critical discount factor $\discount^\circ$ calculated in Lemma~\ref{lemma:c}. So, we don't know upfront whether the Hoeffding-type bound is tighter than the Bernstein-type bound. We compute the number of samples $n$ for each state-action pair based on Theorem~\ref{thm:MARL2}, which is given by  
\begin{align*}
    n &\ge \biggl\lceil \frac{c\discount \|r\|_{\infty}^2\log(8|\StSp|\bigl(\prod_{i \in \playerset}|\ActSp^i| \bigr)|\playerset|(1-\discount)^{-2}p^{-1})}{(1-\discount)\alpha^2} \biggr\rceil \\
    &\ge \biggl\lceil \frac{64 \times 0.9 \times 1.0^2 
    \log(8 \times 3 \times 4 \times 2 / ((1 - 0.9)^2 \times 0.01) )}{(1-0.9) \times 0.1^2}\biggr\rceil = \NB.
\end{align*}
which is significantly smaller than the sample complexity bound of $n \ge \NH$ calculated using Theorem~\ref{thm:MARL} above.

We now verify the Bernstein bound via simulation. We run $M = 1,000$ experiments. For each experiment, we generate $n = \NB$ samples for each state-action pair, and estimate an empirical model $\approxTrans_n(s'|s,a) = \text{count}(s'|s,a)/n$. We compute the MPE $\approxpol_n$ for the approximate game $\approxgame = \langle \StSp, \{\ActSp\}_{i \in \playerset}, \approxTrans, \rew, \discount \rangle$. Then, using Proposition~\ref{prop:approxMPE-charac}, we compute $\alpha_n = (\alpha^1_n, \alpha^2_n)$ such that $\approxpol$ is a $\alpha$-MPE of game~$\game$. The scatter plot of $\alpha_n = (\alpha^1_n, \alpha^2_n)$ along with the empirical distribution of $\alpha^1_n$ and $\alpha^2_n$ are shown in Fig.~\ref{fig:Rplot_histogram}. The values for $\alpha^1_n$ are typically larger than the values for $\alpha^2_n$ because of the parameters of the specific game $\game$ chosen in the example. This trend is also apparent in the analytical $\alpha$ calculated previously based on worst case errors. Note that for most cases, both $\alpha^1_n$ and $\alpha^2_n$ are smaller than $10^{-4}$, which is much less than our target $\alpha$ of $0.1$. This highlights the looseness of the upper bound in Theorem~\ref{thm:MARL2}.

\begin{figure}[!t]
    \centering
    \includegraphics[width=0.9\textwidth]{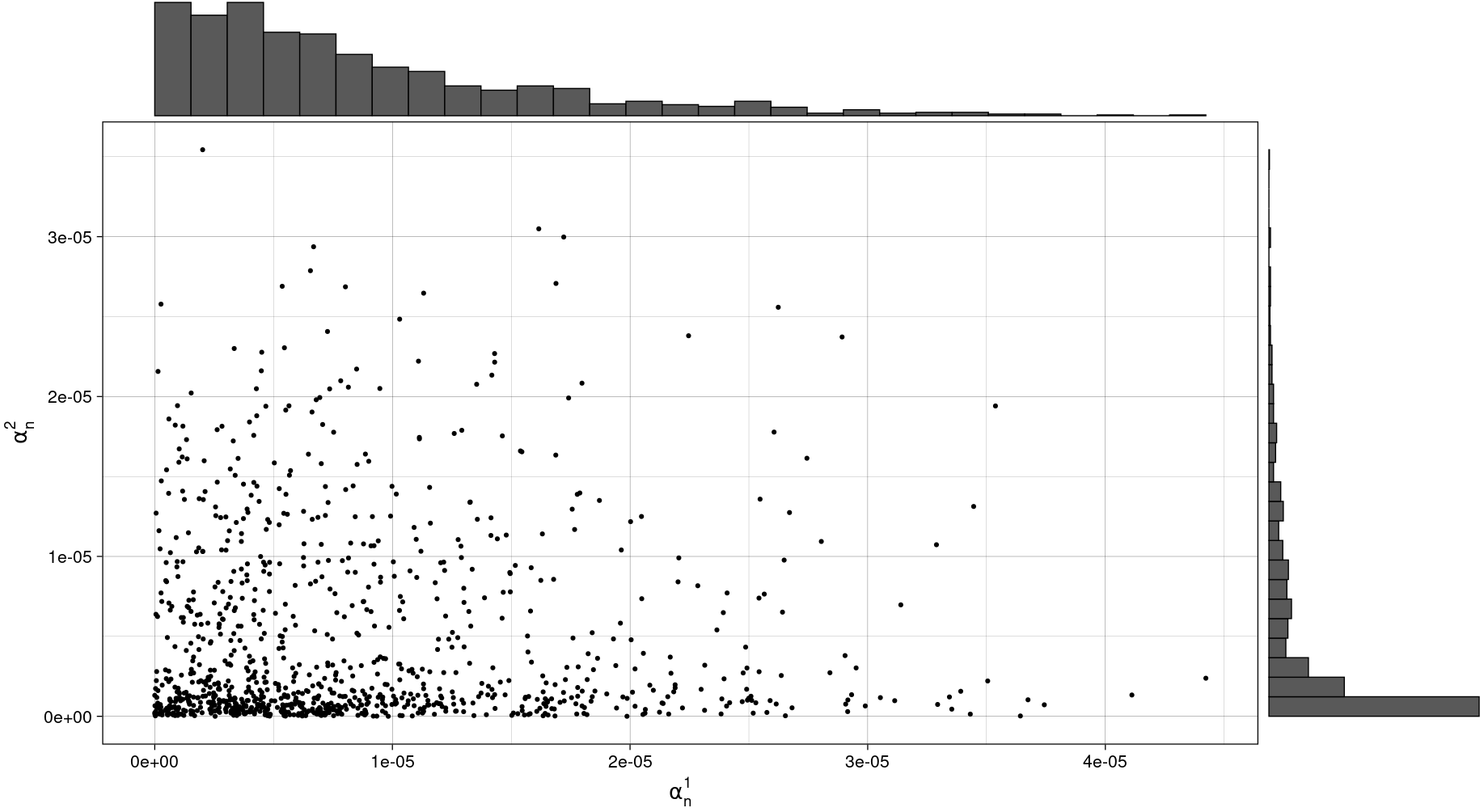}
    \caption{Scatter plot of $(\alpha^1_n,\alpha^2_n)$ such that the policy $\approxpol_n$ of game $\approxgame_n$ is a $(\alpha^1_n,\alpha^2_n)$-MPE of game $\game$ for $M=1,000$ independently generated experiments for $n=\NB$}. The histogram of the marginal probability distribution for $\alpha^1_n$ and $\alpha^2_n$ are shown on the top and the right.
    \label{fig:Rplot_histogram}
\end{figure}

\section{Conclusion}\label{sec:conclusion}
In this paper, we quantify how robust MPE are to model approximations given the degree of approximation in the approximate game. We provide bounds on the degree of approximation based on the approximation error in the reward and transition functions and properties of the value function of the MPE. We also present coarser, instance independent upper bounds, which do not depend of the value function but only depend on the properties of the reward and transition function of the approximate game. 
Using these approximation bounds, we provide sample complexity bounds for computing an approximate MPE using a generative model.

An interesting feature of the results is that the approximation bounds depend on the choice of the metric on probability spaces. We work with a class of metrics known as IPMs and specialize our results for two specific choices of IPMs: total variation distance and Wasserstein distance. However, the results are applicable to any IPM. For games with high-dimensional state spaces, metrics such as maximum mean discrepancy \citep{Sriperumbudur2008} might be more appropriate. 

The generative model setting considered in this paper circumvents the exploration problem of learning. Therefore, the sample complexity results presented in this paper should be viewed as a lower bound on the number of samples required by an algorithm to learn an $\alpha$-MPE. The proposed algorithm is not practical as it requires storing the transition matrix, which has a size of $|\StSp| \prod_{i \in \playerset} |\ActSp^i|$, which is exponential in the number of agents. The algorithm also assumes that there is a system planner which computes the approximate MPE. Developing a scalable and distributing algorithm for learning MPE remains a challenging open problem.

We conclude by noting that the results presented in this paper were restricted to Markov games with perfect information. An interesting future direction is to develop similar approximation bounds for Markov games with imperfect information as well as specific classes of dynamic games such as mean-field games and their variants.

\section*{Compliance with Ethical Standards: Conflict of Interest Statement}

The authors declare that they have no conflict of interest.

\section*{Data Availability Statement}

Data sharing not applicable to this article as no datasets were generated or analysed during the current study.

\appendix

\section{Proof of Theorem~\ref{thm:MDP-approx}}\label{app:MDP}
Let $\Valf_*$ denote the optimal value function for MDP $\MDP$ and $\Valf_{\pol}$ denote the value function for policy $\pol$ in MDP $\MDP$. Let $\approxpol_*$ be the optimal policy for $\approxMDP$ and $\approxpol$ be an $\alpha_{\mathrm{opt}}$-optimal policy for $\approxMDP$. 
From triangle inequality, we have
\begin{equation}
  \| \Valf_* - \Valf_{\approxpol} \|_{\infty} 
  \le
  \| \Valf_* - \approxValf_{\approxpol_*} \|_{\infty} 
  +
  \| \approxValf_{\approxpol_*} - \approxValf_{\approxpol} \|_{\infty}
  + 
  \| \Valf_{\approxpol} - \approxValf_{\approxpol} \|_{\infty} .
  \label{eq:triangle-ineq}
\end{equation}
Now we bound the three terms separately. For the first term, we have
\begin{align*}
  \| \Valf_* - \approxValf_{\approxpol_*} \|_{\infty} 
  &\stackrel{(a)}\le
  \max_{\st \in \StSp} \biggl|
  \max_{\act \in \ActSp} \biggl[ 
    (1-\discount)\rew(\st, \act) + \discount \sum_{\st' \in \StSp}
    \Trans(\st' | \st, \act) \Valf_*(\st') 
  \notag \\
  &\hskip 5em
  -
    (1-\discount)\approxrew(\st, \act) - \discount \sum_{\st' \in \StSp}
    \approxTrans(\st' | \st, \act) \approxValf_{\approxpol_*}(\st') 
  \biggr] \biggr|
  \displaybreak[1] \notag \\
  &\stackrel{(b)}\le
  (1-\discount) \max_{(\st,\act) \in \StSp \times \ActSp}
  \bigl| \rew(\st, \act) - \approxrew(\st, \act) \bigr|
  \notag \\
  &\quad + \discount\max_{(\st,\act) \in \StSp \times \ActSp}
  \biggl| \sum_{\st' \in \StSp}
    \Trans(\st' | \st, \act) \Valf_*(\st') 
    -
    \Trans(\st' | \st, \act) \approxValf_{\approxpol_*}(\st') 
  \biggr|
  \notag \\
  &\quad + \discount\max_{(\st,\act) \in \StSp \times \ActSp}
  \biggl| \sum_{\st' \in \StSp}
    \Trans(\st' | \st, \act) \approxValf_{\approxpol_*}(\st') 
    -
    \approxTrans(\st' | \st, \act) \approxValf_{\approxpol_*}(\st') 
  \biggr|
  \displaybreak[1] \notag \\
  &\stackrel{(c)}\le (1 - \discount) \varepsilon + \discount \| \Valf_* - \approxValf_{\approxpol_*}\|_{\infty} + \discount \Delta_{\approxpol_*},
\end{align*}
where $(a)$ relies on the fact that $\max f(x) \le \max | f(x) - g(x) | + \max g(x)$, $(b)$ follows from triangle inequality, and $(c)$ follows from the definition of $\varepsilon$ and $\Delta_{\approxpol}$. Therefore,
\begin{equation}\label{eq:triangle-ineq-1}
  \| \Valf_* - \approxValf_{\approxpol_*} \|_{\infty} 
  \le 
  \varepsilon + \frac{\discount \Delta_{\approxpol_*}}{1-\discount}.
\end{equation}
For the second term of~\eqref{eq:triangle-ineq}, we have
\begin{equation}
    \| \approxValf_{\approxpol_*} - \approxValf_{\approxpol} \|_{\infty} \le \alpha_{\mathrm{opt}},
    \label{eq:triangle-ineq-2}
\end{equation}
since $\approxpol$ is an $\alpha_{\mathrm{opt}}$-optimal policy of $\approxMDP$.

For the third term of~\eqref{eq:triangle-ineq}, we have
\begin{align*}
  \| \Valf_{\approxpol} - \approxValf_{\approxpol} \|_{\infty} 
  &=
  \max_{\st \in \StSp} \biggl|
  \sum_{\act \in \ActSp} \approxpol(\act | \st)
  \biggl[
    (1-\discount)\rew(\st, \act) + \discount \sum_{\st' \in \StSp}
    \Trans(\st' | \st, \act) \Valf_{\approxpol}(\st') 
    \notag \\
    &\hskip 8em
    -
    (1-\discount) \approxrew(\st,\act) - \discount \sum_{\st' \in \StSp}
    \approxTrans(\st' | \st, \act) \approxValf_{\approxpol}(\st') 
  \biggr] \biggl|
    \displaybreak[1] \notag \\
  &\stackrel{(d)}\le 
  (1-\discount)\max_{\st \in \StSp} \biggl|
  \sum_{\act \in \ActSp} \approxpol(\act | \st)
   \bigl[ \rew(\st,\act) - \approxrew(\st, \act) \bigr] \biggr|
  \notag \\
  &\quad + \discount
  \max_{\st \in \StSp} \biggl|
  \sum_{\act \in \ActSp} \approxpol(\act | \st)
  \biggl[ 
    \sum_{\st' \in \StSp} \bigl[
      \Trans(\st' | \st, \act) \Valf_{\approxpol}(\st')
      -
      \Trans(\st' | \st, \act) \approxValf_{\approxpol}(\st')
  \bigr] \biggr] \biggr|
  \notag \\
  &\quad + \discount
  \max_{\st \in \StSp} \biggl|
  \sum_{\act \in \ActSp} \approxpol(\act | \st)
  \biggl[ 
    \sum_{\st' \in \StSp} \bigl[
      \Trans(\st' | \st, \act) \approxValf_{\approxpol}(\st')
      -
      \approxTrans(\st' | \st, \act) \approxValf_{\approxpol}(\st')
  \bigr] \biggr] \biggr|
    \displaybreak[1] \notag \\
  &\stackrel{(e)}\le (1-\discount)\varepsilon + \discount \| \Valf_{\approxpol} - \approxValf_{\approxpol} \|_{\infty}  + \discount \Delta_{\approxpol}
\end{align*}
where $(d)$ follows from triangle inequality and $(e)$ follows from the definition of $\varepsilon$ and $\Delta_{\approxpol}$. Therefore,
\begin{equation}\label{eq:triangle-ineq-3}
  \| \Valf_{\approxpol} - \approxValf_{\approxpol} \|_{\infty}  
  \le \varepsilon + \frac{\discount \Delta_{\approxpol}}{1-\discount}.
\end{equation}
The result of Theorem~\ref{thm:MDP-approx} then follows by substituting~\eqref{eq:triangle-ineq-1}--\eqref{eq:triangle-ineq-3} in~\eqref{eq:triangle-ineq}. 
\qed

\section{Proof of Theorem~\ref{cor:MDP-approx}}\label{app:MDP2}
We follow the same notation as in Appendix~\ref{app:MDP}. In addition, let $\Bellman_*$ and $\Bellman_\pol$ denote the Bellman operators for the true model and let $\hat{\Bellman}_*$ and $\hat \Bellman_\pol$ denote the Bellman operators for the approximate model. As in Appendix~\ref{app:MDP},
the approximation error can be bounded as
\begin{equation}
  \| \Valf_* - \Valf_{\approxpol} \|_{\infty} 
  \le
  \| \Valf_* - \approxValf_{\approxpol_*} \|_{\infty} 
  +
  \| \approxValf_{\approxpol_*} - \approxValf_{\approxpol}\|_{\infty}
  +
  \| \Valf_{\approxpol} - \approxValf_{\approxpol} \|_{\infty}.
  \label{eq:triangle2}
\end{equation}
Recall that $\Valf_{\approxpol} = \Bellman_{\approxpol} \Valf_{\approxpol}$ and $\approxValf_{\approxpol} = \hat{\Bellman}_{\approxpol} \approxValf_{\approxpol}$. Moreover, $\approxValf_* = \approxValf_{\approxpol_*}$. Therefore, from triangle inequality, we have
\begin{align}
  \| \Valf_* - \Valf_{\approxpol} \|_{\infty} 
  &\le 
  \| \Valf_* - \approxValf_{*} \|_{\infty} 
  +
  \| \approxValf_{\approxpol_*} - \approxValf_{\approxpol}\|_{\infty}
  + 
  \| \Bellman_{\approxpol} \Valf_{\approxpol} - \Bellman_{\approxpol} \Valf_* \|_{\infty}
  \notag \\
  & \quad + 
  \| \Bellman_{\approxpol} \Valf_* - \hat {\Bellman}_{\approxpol} \Valf_* \|_{\infty}
  +
  \| \hat {\Bellman}_{\approxpol} \Valf_* - \hat {\Bellman}_{\approxpol} \approxValf_{*} \|_{\infty}
  +
  \| \hat {\Bellman}_{\approxpol} \approxValf_{*} - \hat {\Bellman}_{\approxpol} \approxValf_{\approxpol}  \|_{\infty}
  \notag \\
  &\stackrel{(a)}\le
  \| \Valf_* - \approxValf_{*} \|_{\infty} 
  +
  \| \approxValf_{\approxpol_*} - \approxValf_{\approxpol}\|_{\infty}
  + 
  \discount \| \Valf_{\approxpol} - \Valf_* \|_{\infty}
  \notag \\
  & \quad + 
  \| \Bellman_{\approxpol} \Valf_* - \hat {\Bellman}_{\approxpol} \Valf_* \|_{\infty}
  +
  \discount \|  \Valf_* -  \approxValf_{*} \|_{\infty}
  +
  \discount \| \approxValf_{*} - \approxValf_{\approxpol}  \|_{\infty}
\end{align}
where $(a)$ from the contraction property of the Bellman operators. Rearranging terms and using~\eqref{eq:triangle-ineq-2}, we get that
\begin{align}
  \| \Valf_* - \Valf_{\approxpol} \|_{\infty} 
  &\le
  \frac{1}{1-\discount} \bigl[
    (1+\discount)  \| \Valf_* - \approxValf_{*} \|_{\infty} 
    +
    (1 + \discount) \alpha_{\mathrm{opt}}
    + 
  \| \Bellman_{\approxpol} \Valf_* - \hat {\Bellman}_{\approxpol} \Valf_* \|_{\infty}
  \bigr]
  \label{eq:triangle2-1}
\end{align}
Now the first term in~\eqref{eq:triangle2-1} can be simplified similar to~\eqref{eq:triangle-ineq-1}, where in step $(b)$ of~\eqref{eq:triangle-ineq-1}, we need to add and subtract $\sum_{\st' \in \StSp} \approxTrans(\st' | \st, \act) \Valf_{\pol_*}(s')$. Using this, we would obtain
\begin{equation}\label{eq:triangle2-ineq-1}
  \| \Valf_* - \approxValf_{*} \|_{\infty} 
  \le 
  \varepsilon + \frac{\discount \overline \Delta_{\pol_*}}{1-\discount}.
\end{equation}
The last term in~\eqref{eq:triangle2-1} can be simplified as follows:
\begin{align}
  \| \Bellman_{\approxpol} \Valf_* - \hat {\Bellman}_{\approxpol} \Valf_* \|_{\infty}
  &\le
  \max_{\st \in \StSp} \sum_{\act \in \ActSp} \approxpol(\act | \st)
  \biggl[ | \rew(\st,\act) - \approxrew(\st,\act) | \notag \\
  & \hskip 7em + 
  \discount \biggl| \sum_{\st' \in \StSp} \Trans(\st'|\st,\act) \Valf_*(\st')
  - \sum_{\st' \in \StSp} \approxTrans(\st' | \st, \act) \Valf_*(\st') \biggr| \biggr]
  \notag \\
  &\le \varepsilon + \discount \overline \Delta_{\pol_*}
  \label{eq:triangle2-ineq-2}
 \end{align}
The result of Theorem~\ref{cor:MDP-approx} follows by substituting~\eqref{eq:triangle2-ineq-1} and~\eqref{eq:triangle2-ineq-2} in~\eqref{eq:triangle2-1}.
\qed

\end{document}